\begin{document}

\title{Geometry of Cells Sensible to Curvature and Their Receptive Profiles}

\author{Vasiliki Liontou \inst{1}\orcidID{0000-0002-1851-4101} }

\authorrunning{V. Liontou}

\institute{University of Bologna, Bologna, Italy, 
\email{vasiliki.liontou@unibo.it}}

\maketitle              

\begin{abstract}

We propose a model of the functional architecture of curvature sensible cells in the visual cortex that associates curvature with scale.  The feature space of orientation and position is naturally enhanced via its oriented prolongation, yielding a 4-dimensional manifold endowed with a canonical Engel structure. This structure encodes position, orientation, signed curvature, and scale. We associate an open submanifold of the prolongation with the quasi-regular representation of the similitude group $SIM(2)$, and find left-invariant generators for the Engel structure. Finally, we use the generators of the Engel structure to characterize curvature-sensitive receptive profiles .

\keywords{Visual cortex  \and Engel Structure \and Similitude group.}
\end{abstract}

\section{Introduction}
The mammalian visual cortex is organized in 
families of cells, each one sensitive to a specific feature of the stimulus image: position, orientation,
scale, color, curvature, movement, stereo and many others. The selectivity of a cell to a differential feature introduces a differential
constraint in the space. The prototype example is the contact 3-manifold underlying the model of $V_1$  in \cite{Tondut},\cite{FuncArch},\cite{Pet}, which is of the form $M = \mathbb{S}(T^*\mathbb{E}^2)$, namely the unit sphere bundle of the cotangent bundle of the Euclidean plane $\mathbb{E}^2$, also known as the manifold of contact elements of $\mathbb{E}^2$. Under this assumption, the topological feature space of all possible positions and orientations is identified with $\mathbb{R}^2\times \mathbb{S}^1$. Moreover, to obtain the whole family of orientation-position receptive profiles from the mother window one should apply the group action of the special Euclidean group $SE(2)$ on $\mathbb{R}^2$.The group $SE(2)$ describes the orientation preserving isometries of $\mathbb{E}^2$, which are the translations and rotations.
The Lie group embedding of $SE(2)$ to the group of diffeomorphisms of $\mathbb{R}^2$
is the group action of $SE(2)$ on $\mathbb{R}^2$. Via this group action the vector fields generating the contact distribution of $M$ can be pushed-forward to vector fields of $\mathbb{R}^2$, introducing non-commutative differential operators on receptive profiles.  Extending the idea that receptive profiles are minimizers of uncertainty principle, which was firstly introduced for orientation-spatial frequency and position in \cite{Daug} , the authors of \cite{Uncertainty} prove that the receptive profiles for orientation and position are the minimizers
of the uncertainty principle with respect to these differential operators. 

The structure of orientation-position sensible cells in $V_1$ indicates the existence of a framework applicable to other groups of cells which also describes the modular structure of the visual cortex, as described in \cite{CortArch}. The modular structure is compatible
with the experiments Hubel and Wiesel in  \cite{Hubel}. This framework includes a differential constraint governing the connectivity between cells and a Lie group whose symmetries are learned from the symmetries of the stimulus. Moreover, this Lie group attains an action of the retinal surface. If the differential constraints are consistent with the Lie group structure- for instance, the contact 1-form of orientation-position cells is left-invariant with respect to $SE(2)$- then we can use the Lie group action to find non-commuting vector fields on $\mathbb{R}^2$. These vector fields impose an uncertainty principle, and the receptive profiles are the minimizers.

On the other hand, the authors in \cite{Scale} use the symplectization of $\mathbb{S}(T^*\mathbb{E}^2)$ to take into account the scale, obtaining differential constraints for the enlarged space induced by the preceding contact structure (a method first applied in \cite{Tondut} and extended in \cite{Scale}, \cite{LioMarc}). In \cite{Duits} the Lie group of similitudes was associated to the enlarged space of scale.  The authors use the similitude group $SIM(2)$, which appeared as the Lie group of the Lie algebra of differential constraints in \cite{Scale}, to construct a group-wavelet transform. 
Then they perform left-invariant diffusion in the range of the 
$SIM(2)$ transform and show that by changing the coefficient of the metric on the scale axis, the curvature of the level sets of the Heat Kernel are affected. This result indicates a relation between curvature and scale.  
 
Models of curvature have employed Engel or Engel-type structures on the space $\mathbb{R}^2 \times SO(2) \times \mathbb{R}$ (\cite{NeuroBook}, \cite{Liouville}, \cite{Pet}) to introduce differential constraints. However, unlike in other cell families, no associated Lie group structure has yet been identified. The core difficulty lies in the fact that, unlike the horizontal vector fields on $M$, the horizontal vector fields of the Engel structure generate an infinite-dimensional Lie algebra.

In this work, we provide a model of cells sensible to curvature, 
and their receptive profiles as a circle bundle over the manifold of orientation-position sensible cells, that associates curvature with scale.  
The retina is modeled as a surface endowed with a Riemannian metric tensor, accounting for the log-polar mapping. The circle bundle is the prolongation of the contact manifold orientation-position sensible cells and the differential constraints are given by the canonical Engel structure of the prolongation.
Previous models for scale are contained as open submanifolds of this model, away from curvature 0. As a result, on this submanifold the Engel structure is left-invariant with respect to the similitude group $SIM(2)$. 
Moreover, we use the action of $SIM(2)$ on $\mathbb{R}^2$ to obtain differential operators that characterize the receptive profiles.
The scope is to provide a model for curvature that reflects the modular structure of the visual cortex, where the structure of each group of cells is obtained from the previous one, as described in \cite{CortArch}.

\section{Preliminaries} 

We consider the retina $\mathcal{S}$ as a Riemannian 2-dimensional manifold, diffeomorphic to $\mathbb{R}^2$, equipped with a metric tensor $g$, see \cite{LogPolar}. 
 The orientation–position feature space is naturally identified with the manifold of \textbf{oriented contact elements }of $\mathcal{S}$, given by 
\begin{equation*}
M = T^*\mathcal{S} \setminus \{0\} / \mathbb{R}^+ \simeq \mathbb{R}^2 \times \mathbb{S}^1
\end{equation*}
where $\{0\}$ denotes the zero section of the cotangent bundle $T^*\mathcal{S}$ and $\mathbb{R}^+$ acts by scalar multiplication on covectors \cite{Tondut}, \cite{FuncArch}, \cite{LioMarc}. 

Since $\mathcal{S}$ is endowed with a metric tensor $g$, the space of oriented contact elements is canonically identified with the \textbf{unit cosphere bundle} $\mathbb{S}(T^*\mathcal{S})$. Via the canonical isomorphism induced by the metric $g$, each unit covector $q\in \mathbb{S}(T^*\mathcal{S})$ corresponds to a unit vector $X\in \mathbb{S}\mathcal{S}$ satisfying $$q(-)=g(X,-).$$ Thus, at each point $p=(x,y)\in \mathcal{S}\simeq \mathbb{R}^2$, we introduce local coordinates $(x,y,\theta)$ on $\mathbb{S}(T^*\mathbb{R}^2)$, where the angle function  $\theta_{(x,y)}: \mathbb{S}(T_{(x,y)}^*\mathcal{S})\rightarrow (0,2\pi)$ measures the angle between the unit vector field $X$ (associated to $q$) and the coordinate vector field $\partial_x$ at $(x,y)$, namely 
\begin{equation}
    \theta(q)= Arg(e^{i g(X, \partial_x)}), \text{ for } X\in \mathbb{S}(T_{(x,y)}\mathbb{R}^2) \text{ with } X\neq \partial_x.
\end{equation}
 The horizontal connectivity is determined by contact distribution $$\tau= span \{\partial_\theta, cos(\theta)\partial_x+ sin(\theta)\partial_y\}$$ which is locally the kernel of the contact 1-form expressed in coordinates $(x,y,\theta)$
\begin{equation}
    a= -sin(\theta)dx+cos(\theta)dy.
\end{equation}

Given a distribution $\mathcal{D}$ of $k-planes$ on a manifold $M$, one can consider the \textit{oriented projectivization} $\mathbb{S}M$ of $\mathcal{D}$ which is the fiber bundle over $M$ with typical fiber $\mathbb{S}_qM$ the half-lines of $\mathcal{D}_q$.  In this case, the distribution of interest is the contact distribution $\tau$. The oriented projectivization of $M$ is diffeomorphic to $M\times \mathbb{S}^1$.  The projectivization $\mathbb{S}M$ inherits a canonical 2-dimensional plane distribution defined by declaring that a curve $(q(t),\ell(t))\in M\times\mathbb{S}^1$ is horizontal if and only if the derivative of the point of contact $\dot{q}(t)$ lies on the line $\ell(t)$ for every $t$. Equivalently, if $\pi: M\times \mathbb{S}^1\rightarrow M$ is the projection of the fiber bundle and $d\pi: T( M\times \mathbb{S}^1)\rightarrow T( M) $ its differential, the distribution $\mathcal{D}$ is described as follows
\begin{equation}
    \mathcal{D}=\{d\pi^{-1}_{(q,\ell)}(\ell_q): (q,\ell) \in  M\times \mathbb{S}^1 \}.
\end{equation}
The projectivization $\mathbb{P}M$ (or the oriented projectivization $\mathbb{S}M$) together with the distribution $\mathcal
{D}$ is the \textbf{ (oriented) prolongation} of $M$ \cite{Montgomery}.The vector fields $X_1=cos(\theta)\partial_x+sin(\theta)\partial_y$ and $X_2=\partial_\theta$  form linear coordinates on each contact plane $\tau_q$ and therefore a ray $\ell_q \subset \tau_q$ can be expressed with respect to these vector fields as
\begin{equation}
    \ell_q= r(s) X_1(q)+ \kappa(s) X_2(q), ~ s>0.
\end{equation}
A curve $\Gamma(t)=\{(x(t),y(t),\theta(t),\ell(t)), ~t\in \mathbb{R}\}$ on $M\times\mathbb{S}^1$ is horizontal if and only if the tangent vector $\dot{x}(t)\partial_x+ \dot{y}(t)\partial_y+\dot{\theta}(t)\partial_\theta$ on the projection $\pi(\Gamma(t))\in M$ lies on the ray $\ell(t)=r(s,t)X_1(q(t))+ \kappa(s,t)X_2(q(t))$. Thus the horizontality condition reads as
\begin{align*}
    \dot{x}(t)= r(s,t)cos(\theta(t)),~
    \dot{y}(t)=r(s,t)sin(\theta(t)),\\
    \dot{\theta}(t)=\kappa(s,t),~
    \ell(t)\in \mathbb{S}^1.
\end{align*}
Using affine coordinates on $\tau$ with respect to the frame $\{X_1,X_2\}$, the distribution $\mathcal{D}$ is spanned by the vector fields
\begin{equation*}
   \mathcal{X}= rcos(\theta)\partial_x+rsin(\theta)\partial_y+ \partial_\theta, ~ \mathcal{R}=\partial_r
\end{equation*}
on the open submanifold $\mathbb{S}M\setminus \{\kappa\neq 0\}$ with coordinates $(x,y,\theta,r)$ and by 
\begin{equation*}
     \mathcal{X}= cos(\theta)\partial_x+sin(\theta)\partial_y+ \kappa\partial_\theta, ~ \mathcal{K}=\partial_\kappa
\end{equation*}
on the open submanifold $\mathbb{S}M\setminus \{r\neq 0\}$ with coordinates $(x,y,\theta,\kappa)$.
\section{Curvature Feature Space} 
\subsection{Engel Structure on the Curvature Space}
We use the circle bundle $M\times \mathbb{S}^1$ with co-rank 2 distribution $\mathcal{D}$, which is the oriented projectivization of the orientation-position contact manifold $(M=\mathbb{S}(T^*\mathcal{S}),\tau)$ to describe the singed curvature space.  
\begin{theorem}\label{Theorem}
    The feature space of position-orientation and signed curvature is the oriented prolongation $\mathbb{S}M$ of the feature space of orientation-position $(M,\tau)$. 
\end{theorem}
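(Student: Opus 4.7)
The plan is to identify each point of the oriented prolongation $\mathbb{S}M$ with a triple (position, tangent orientation, signed curvature) and to verify that horizontal curves in $(\mathbb{S}M,\mathcal{D})$ correspond exactly to regular curves in $\mathcal{S}$ together with their 2-jet data. I will carry this out by reading the feature content directly off the two affine charts introduced above.

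First I work in the chart $(x,y,\theta,\kappa)$ on $\mathbb{S}M\setminus\{r=0\}$, where the dehomogenized ray is $\ell=\mathbb{R}^+(X_1+\kappa X_2)$. A horizontal curve $\Gamma(t)=(x(t),y(t),\theta(t),\kappa(t))$ must then satisfy, after absorbing the scaling factor into the parametrization, the relations $\dot x=\cos\theta$, $\dot y=\sin\theta$, $\dot\theta=\kappa$ coming from the expression of $\mathcal{X}$ already displayed. The projection to $\mathcal{S}$ is thus arc-length parametrized with tangent angle $\theta$, and the classical identity $\kappa_{\mathrm{geo}}=\dot\theta$ for arc-length parametrized planar curves gives $\kappa(t)=\kappa_{\mathrm{geo}}(t)$: the coordinate $\kappa$ is literally the signed curvature of the projected curve. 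In the complementary chart $(x,y,\theta,r)$ on $\mathbb{S}M\setminus\{\kappa=0\}$ the same computation yields $\dot x=r\cos\theta$, $\dot y=r\sin\theta$, $\dot\theta=1$, so the projected curve has speed $|r|$ and signed curvature $1/r$; the coordinate $r$ plays the role of signed radius of curvature, and the overlap transition $r=1/\kappa$ is the standard change of affine charts on $\mathbb{RP}^1$, lifted to its oriented double cover $\mathbb{S}^1$.

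To finish the identification I establish the converse: any regular smooth curve $\gamma:I\to\mathcal{S}$ lifts uniquely to a horizontal curve in $\mathbb{S}M$ by sending $t$ to the position $\gamma(t)$, the angle of its unit tangent, and the ray in the contact plane $\tau_{q(t)}$ determined by its signed curvature. This gives a bijection between germs of regular curves in $\mathcal{S}$ (up to orientation-preserving reparametrization) and horizontal curves in $\mathbb{S}M$, which is the precise statement that $\mathbb{S}M$ is the position--orientation--signed-curvature feature space. The main obstacle is the interpretation of the locus $\{r=0\}$: there the spatial projection degenerates to a point while $\theta$ continues to evolve, corresponding geometrically to ``infinite'' signed curvature. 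One must verify that the oriented prolongation glues this locus smoothly to the affine curvature chart, so that the $\mathbb{S}^1$ fiber realizes the correct oriented compactification of the real signed-curvature axis and $\mathbb{S}M$ becomes a globally well-defined feature space.
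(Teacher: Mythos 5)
Your proposal is correct and follows essentially the same route as the paper's (sketch of a) proof: both identify the fourth coordinate of a horizontal lift with the signed curvature via the horizontality relations $\dot x=\cos\theta$, $\dot y=\sin\theta$, $\dot\theta=\kappa$ and the classical identity $\mathfrak{k}_\gamma=\dot\theta$ for arc-length parametrized plane curves. You go somewhat further than the paper, which only treats the forward direction in the arc-length chart, by also checking the second affine chart, the converse lifting statement, and the gluing along the infinite-curvature locus $\{r=0\}$, but the core argument is identical.
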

\begin{proof}(Sketch)
Let $\gamma(t)=(x(t),y(t))$ be a curve on the retina $\mathcal{R}=(\mathbb{R}^2, g_{\mathbb{E}})$, expressed in standard coordinates. The lift $\tilde{\gamma}$ of $\gamma$ to the orientation-position feature space defined as $\tilde{\gamma}(t)=\{q(t)\in \mathbb{S}(T^*\mathbb{R}^2): q(t)(\dot{\gamma}(t))=0\}$ is horizontal with respect to the contact distribution $\tau$. In local coordinates $(x,y,\theta)$, the lift  is $\tilde{\gamma}=q(t)=(x(t),y(t),\theta(t))$ where $(x(t),y(t))$ is the planar position while $\theta(t)$ is the angle of the tangent vector $\dot{\gamma}(t)$ with $\partial_x$. Now, by definition the signed curvature $\mathfrak{k}_\gamma$ of $\gamma$ is the rate of change of $\theta(t)$, $\mathfrak{k}_\gamma(t)=\dot{\theta}(t)=\kappa(t,s)$.  Therefore, the horizontal lift $\Gamma(t)=(q(t),\ell(t))$ of $\tilde{\gamma}$ on the prolongation $\mathbb{S}M$ is \begin{equation*}
    \Gamma(t)=(x(t),y(t),\theta(t),\dot{\theta}(t))=(x(t),y(t),\theta(t),\mathfrak{k}_\gamma(t)).
\end{equation*}
\end{proof}
The rank 2-distribution $\mathcal{D}$ is an Engel structure, \cite{Montgomery}. Engel structures form another class of nonintegrable distributions which
is closely related to contact structures. By definition, an Engel structure is a
smooth distribution $\mathcal{D}$ of rank 2 on a manifold  of dimension 4 which satisfies
the nonintegrability conditions
\begin{equation}
   rank [\mathcal{D},\mathcal{D}]=3~,~ rank[\mathcal{D},[\mathcal{D},\mathcal{D}]]=4,
\end{equation}
 where $[\mathcal{D}, \mathcal{D}]$ consists of those tangent vectors which can be obtained by taking
commutators of local sections of $\mathcal{D}$. 
In fact, prolongations and oriented prolongations of contact manifolds such as $\mathbb{S}\tau$ are the canonical examples of Engel structures. More specifically, for the absolute curvature-position-orientation space $(\mathbb{P}\tau, \mathcal{D})$ we have on the the local chart $\{r \neq 0\}$
\begin{align*}
 [\mathcal{D},\mathcal{D}
 ]= span\{\mathcal{X},\mathcal{K}, \mathcal{Y}=[\mathcal{X},\mathcal{K}]=\partial_\theta\}\\
 [\mathcal{D},[\mathcal{D},\mathcal{D}]]=span\{\mathcal{X},\mathcal{K},\mathcal{Y}, \mathcal{Z}=[\mathcal{X},\mathcal{Y}]=-sin(\theta)\partial_x+cos(\theta)\partial_y\}.
\end{align*}
In the table bellow we summarize the mechanism of consecutive prolongations from the retina that leads to the circle bundle $\mathbb{S}_\tau$ of Theorem \ref{Theorem} which model the oriented curvature and scale feature space.
\[
\begin{tikzcd}[row sep=small]
\begin{array}{c}
\mathbb{S}\tau = (M \times \mathbb{S}^1 ,\mathcal{D})\\
\scriptstyle\text{(signed curvature-position-orientation)}
\end{array}
\arrow[d] \\
\begin{array}{c}
\mathbb{P}\tau = M \times \mathbb{P}^1 \\
\scriptstyle\text{(absolute curvature-position-orientation)}
\end{array}
\arrow[d] \\
\begin{array}{c}
M=(\mathcal{S} \times \mathbb{S}^1,\tau) \\
\scriptstyle\text{(position-orientation)}
\end{array}
\arrow[d] \\
\begin{array}{c}
\mathcal{S} \\
\scriptstyle\text{(position)}
\end{array}
\end{tikzcd}
\quad
\begin{tikzcd}[row sep=large]
\begin{array}{c}
\scriptstyle\text{(Oriented Prolongation, Engel Structure)}
\end{array} \\
\begin{array}{c}
\scriptstyle\text{(Prolongation, Engel Structure)}
\end{array} \\
\begin{array}{c}
\scriptstyle\text{(contact elements of $\mathcal{S}$, $SE(2)$-transform)}
\end{array} \\
~
\end{tikzcd}
\]
In regularity theory or more generally in local problems, the important property is that the Engel distribution $\mathcal{D}$ is bracket generating. In fact, locally all Engel structures are equivalent in a similar way that all contact structures are locally equivalent. 
\begin{lemma}
    The closure of $\mathcal{D}$ under the Lie bracket is an infinite-dimensional Lie algebra.
\end{lemma}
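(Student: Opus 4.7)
The plan is to exhibit an explicit infinite family of vector fields in the Lie algebra generated by $\mathcal{D}$ and to show they are linearly independent over $\mathbb{R}$. Working on the chart $\{r\neq 0\}$ with the generators $\mathcal{X}=\cos\theta\,\partial_x+\sin\theta\,\partial_y+\kappa\partial_\theta$ and $\mathcal{K}=\partial_\kappa$ already introduced, I would study the sequence $V_n:=\mathrm{ad}_{\mathcal{X}}^{\,n}(\mathcal{K})$ for $n\geq 0$. The only computational input needed is the observation that the coefficients of $\mathcal{X}$ depend solely on $\theta$ and $\kappa$, so $[\mathcal{X},\partial_x]=[\mathcal{X},\partial_y]=0$, and for any smooth $f(\theta,\kappa)$ one has $\mathcal{X}(f)=\kappa\,\partial_\theta f$.

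First I would compute the initial terms $V_0=\partial_\kappa$, $V_1=-\partial_\theta$, $V_2=-\sin\theta\,\partial_x+\cos\theta\,\partial_y$, and then establish by induction that for $n\geq 2$,
\begin{equation*}
V_n=\kappa^{\,n-2}\bigl(a_n(\theta)\,\partial_x+b_n(\theta)\,\partial_y\bigr),
\end{equation*}
where the pair $(a_n,b_n)$ cycles through $(-\sin\theta,\cos\theta)$, $(-\cos\theta,-\sin\theta)$, $(\sin\theta,-\cos\theta)$, $(\cos\theta,\sin\theta)$. The inductive step is a one-line application of $\mathcal{X}\bigl(\kappa^{m} f(\theta)\bigr)=\kappa^{m+1}f'(\theta)$, which both raises the power of $\kappa$ by one and rotates the angular part by $\pi/2$ without ever vanishing.

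With these explicit formulas in hand I would conclude by a standard linear-independence argument: evaluated at any fixed $(x_0,y_0,\theta_0)$ and viewed as functions of $\kappa$, the $\partial_x$ and $\partial_y$ components of a finite linear combination $\sum_n c_n V_n$ are polynomials in $\kappa$ of pairwise distinct degrees with nonzero leading coefficients, so the combination vanishes only when every $c_n=0$. Hence $\{V_n\}_{n\geq 0}$ is an infinite linearly independent family inside the Lie-algebraic closure of $\mathcal{D}$, proving that this closure is infinite-dimensional. The only subtle point is checking that the leading-in-$\kappa$ coefficient is never zero; this is precisely the rotational pattern of $(a_n,b_n)$, which cannot collapse since the derivation $f\mapsto \kappa\,\partial_\theta f$ does not annihilate $\sin\theta$ or $\cos\theta$. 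The main "obstacle" is therefore little more than careful bookkeeping of the induction, and no structural input beyond the explicit form of $\mathcal{X}$ is required.
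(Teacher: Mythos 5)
The paper states this lemma without any proof, so there is no argument of the author's to compare yours against; judged on its own, your proof is correct and essentially complete. The computation checks out: since the coefficients of $\mathcal{X}=\cos\theta\,\partial_x+\sin\theta\,\partial_y+\kappa\,\partial_\theta$ depend only on $(\theta,\kappa)$ and the iterates $V_n=\mathrm{ad}_{\mathcal{X}}^{\,n}(\mathcal{K})$ for $n\geq 2$ have only $\partial_x,\partial_y$ components, each bracket reduces to applying $f\mapsto\mathcal{X}(f)=\kappa\,\partial_\theta f$ to the coefficients, which raises the $\kappa$-degree by one and rotates the angular pair by a quarter turn, exactly as you claim; this is precisely the mechanism that distinguishes the Engel frame from the contact frame $\{X_1,X_2\}$ on $M$, whose bracket closure stabilizes at dimension three. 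One phrase needs tightening: at a fixed $\theta_0$ the ``leading coefficient'' of a \emph{single} component can vanish (e.g.\ $a_2(\theta_0)=-\sin\theta_0=0$ at $\theta_0=0$), so the linear-independence argument should be run on the $\partial_x$- and $\partial_y$-components jointly, using that the coefficient of $\kappa^{n-2}$ in the pair is $c_n\bigl(a_n(\theta_0),b_n(\theta_0)\bigr)$ with $(a_n,b_n)$ a unit vector, hence forcing $c_n=0$; alternatively, let $\theta$ vary and use that $a_n$ is not identically zero. With that bookkeeping fixed, you have an explicit infinite linearly independent family inside the bracket closure of $\mathcal{D}$, which proves the lemma.
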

The goal is to associate a finite Lie group with the prolongation $\mathbb{P}\tau$ (or the oriented prolongation $\mathbb{S}\tau$) such that the Engel structure $\mathcal{D}$ is left-invariant. We introduce, a new pair of vector fields in the open submanifold $U\subset\mathbb{P}\tau$ which is the intersection of the affine charts $$U=\{(q,\ell)\in \mathbb{S}\tau: \ell_q=rX_1(q)+\kappa X_2(q) \text{ and } \kappa \cdot r \neq 0\}\simeq \mathbb{R}^2\times \mathbb{S}^1\times \mathbb{R}^+.$$ In this open subset of $\mathbb{S}\tau$, we can consider a new pair of generators for the distribution $\mathcal{D}$, namely in local coordinates $(x,y,\theta, r)$ the new generators are
\begin{equation}\label{vectorfields}
    \mathcal{X}_{loc}= \mathcal{X}= r cos(\theta)\partial_x+rsin(\theta)\partial_y+\partial_\theta, ~ \mathcal{R}_{loc}=r\mathcal{R}=r \partial_r.  
\end{equation}

\subsection{Curvature, Scale and the Similitude Group}
In models of the cells sensible to scale (\cite{Scale}, \cite{Scale2}, \cite{Duits}) the underlying manifold is $\mathbb{R}^2\times\mathbb{S}^1\times\mathbb{R}^+$ and the differential constraints are given by a differential 2-form which is derived from the contact 1-form on $SE(2)$ via symplectization.  The commutation rules associated to the vector fields that arise from this differential constraint, show that the Lie group associated to the scale is the group of similitudes SIM(2).  
 The similitude group is the semidirect product $SIM(2)=\mathbb{R}^2\rtimes (SO(2)\times \mathbb{R}^+)$, hence $SIM(2)$ is the set $\mathbb{R}^2\times \mathbb{S}^1\times \mathbb{R}^+$ with group law
\begin{equation*}
    ((x,y),\theta,\delta)\cdot ((x^\prime,y^\prime),\theta^\prime, \delta^\prime)=((x,y)+\delta\theta (x^\prime,y^\prime), \theta+\theta^\prime, \delta \delta^\prime).
\end{equation*}
The group can be identified with the group generated by planar translations $T_{(x,y)}$, planar rotations $R_\theta$ and dilations $D_\delta$ and therefore it acts transitively on $\mathbb{R}^2$ giving rise to the quisi-regular representation on $L^2(\mathbb{R}^2)$ via 
\begin{align*}
   \pi(x,y,\theta,\delta)f(z)=D_\delta R_\theta T_{(x,y)} f(z)= |\delta|^{-1}f(\delta^{-1}R_\theta(z-(x,y))), z\in\mathbb{R}^2.
\end{align*}
Indeed, let $(x_0,y_0)$ be a point on a planar curve $\gamma(t)=(x(t),y(t))$, let $\theta$ be the angle of $\dot{\gamma}(t)$ at $(x_0,y_0)$ with $\partial_x$ and let $C$ be the osculating circle of $\gamma$ at point $(x_0,y_0)$. If $\mathfrak{k}_\gamma(t)$ is the signed curvature of $\gamma$, the radius of the osculating circle is  $R_C(t)=\frac{1}{|\mathfrak{k}_\gamma(t)|}$.

Thus, the translation $T_{-(x_0,y_0)}$ and rotation $R_{-\theta}$ translate the curve to the origin of the plane and rotate it such that $\dot{\gamma}(t)$ at $(x_0,y_0)$ forms a $0$ degree angle with $\partial_x$. Consequently, the osculating circle at $(x_0,y_0)$ is now tangent to the horizontal axis at the origin with center $(0,R)$. The dilation $D_\delta$ scales the radius of the osculating circle $R_C(t)\mapsto \delta R_C(t)$.

\begin{center}
\begin{tikzpicture}[scale=0.9]

\begin{scope}[shift={(2,0)}]
\node at (1,-1.2) {\textbf{(2) }};
\draw[->] (-1.2,0) -- (1.8,0) node[below right] {$x'$};
\draw[->] (0,-0.5) -- (0,2) node[left] {$y'$};

\draw[thick, domain=-1.2:1.5, smooth, variable=\x] 
  plot({\x}, {0.2*\x*\x});

\filldraw[black] (0, 0) circle (0.03);
\node[below left] at (0, 0) {$(0,0)$};

\draw[->, thick, red] (0,0) -- (1,0);
\node[above] at (0.5,0.05) {$\dot{\gamma}(t)$};

\draw[blue, dashed] (0,0.75) circle (0.75);
\node[blue] at (1.4,0.75) {\small osculating circle};
\end{scope}

\begin{scope}[shift={(8,0)}]
\node at (1,-1.2) {\textbf{(3) }};
\draw[->] (-1.2,0) -- (1.8,0) node[below right] {$x$};
\draw[->] (0,-0.5) -- (0,2.5) node[left] {$y$};

\foreach \r in {0.3, 0.6, 0.9, 1.2} {
  \draw[blue, dashed] (0,\r) circle (\r);
}

\node[blue] at (1.4, 1.2) {\small osculating circle family};

\filldraw[black] (0, 0) circle (0.03);
\node[below left] at (0, 0) {$(0,0)$};

\end{scope}

\end{tikzpicture}
\end{center}
\begin{lemma}
  Let $\gamma=(x(t),y(t))$ a planar curve and let
  $\Gamma=((x(t),y(t),\theta(t),\ell(t))$ be its horizontal curve on the prolongation $(\mathbb{S}\tau, \mathcal{D})$. If $\ell(t)=r(t)X_1+\kappa(t)X_2$ and $r(t)\cdot \kappa(t)\neq 0$ for every $t$, then $\frac{|r(t)|}{\kappa(t)}$ is the radius of the osculating circle $C_\gamma(t)$ of $\gamma$ at $\gamma(t)$.  
\end{lemma}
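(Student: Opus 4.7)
The plan is to unpack the horizontality condition for $\Gamma$ in the chart $(x,y,\theta,\kappa)$ introduced in the preliminaries, read off the relation between $r(t)$, $\kappa(t)$ and the infinitesimal data of $\gamma$, and then recognize the classical formula for the signed curvature of a planar curve.

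First I would write the horizontality equations for $\Gamma$ with respect to $\mathcal{D}$. Since $\ell(t)=r(t)X_1(q(t))+\kappa(t)X_2(q(t))$, the tangent vector of the projection $\pi(\Gamma(t))\in M$ must lie on the ray $\ell(t)$, which as already recalled in the text yields
\begin{equation*}
\dot x(t)=r(t)\cos\theta(t),\qquad \dot y(t)=r(t)\sin\theta(t),\qquad \dot\theta(t)=\kappa(t).
\end{equation*}
From the first two equations the speed of $\gamma$ is $|\dot\gamma(t)|=|r(t)|$, and the unit tangent is $\mathrm{sgn}(r(t))(\cos\theta(t),\sin\theta(t))$, so $\theta(t)$—up to a possible $\pi$-shift absorbed in the orientation of $\ell$—is precisely the angle that $\dot\gamma(t)$ forms with $\partial_x$, consistent with the construction in the sketch of Theorem~\ref{Theorem}.

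Next I would invoke the standard identity expressing the signed curvature of a regular planar curve through its turning angle: $\mathfrak{k}_\gamma(t)=\dot\theta(t)/|\dot\gamma(t)|$. Substituting the horizontality relations gives
\begin{equation*}
\mathfrak{k}_\gamma(t)=\frac{\kappa(t)}{|r(t)|},
\end{equation*}
which is well defined under the standing hypothesis $r(t)\kappa(t)\neq 0$. Since the radius of the osculating circle is $R_{C_\gamma}(t)=1/|\mathfrak{k}_\gamma(t)|$, one obtains $R_{C_\gamma}(t)=|r(t)|/|\kappa(t)|$, matching the formula in the statement (the sign of $\kappa(t)$ in the displayed expression corresponds to the signed osculating radius determined by the orientation of the ray $\ell(t)$ inside the contact plane $\tau_{q(t)}$).

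The main obstacle is the bookkeeping of signs and orientations: one has to make it explicit that the angular coordinate $\theta$ on $M$, defined as the angle between the unit vector dual to $q$ and $\partial_x$, really coincides along $\tilde\gamma$ with the geometric turning angle of $\gamma$, so that the formula $\mathfrak{k}_\gamma=\dot\theta/|\dot\gamma|$ can be applied without ambiguity. Once this identification is spelled out—using the fact that the lift $\tilde\gamma$ to $M$ is horizontal with respect to $\tau$, as in the proof of Theorem~\ref{Theorem}—the remainder of the argument is a direct algebraic substitution.
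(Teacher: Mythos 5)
Your argument is correct and matches the one the paper implicitly relies on: the lemma is stated without a separate proof, but the horizontality relations $\dot x=r\cos\theta$, $\dot y=r\sin\theta$, $\dot\theta=\kappa$, the turning-angle formula $\mathfrak{k}_\gamma=\dot\theta/|\dot\gamma|$, and $R_{C}=1/|\mathfrak{k}_\gamma|$ are exactly the ingredients assembled in the sketch of Theorem~\ref{Theorem} and the preceding osculating-circle paragraph, with your division by the speed $|\dot\gamma|=|r|$ supplying the normalization the paper's sketch elides by working implicitly at unit speed. Your observation that the computation actually yields $R_{C_\gamma}(t)=|r(t)|/|\kappa(t)|$ rather than $|r(t)|/\kappa(t)$ is well taken, since the stated formula gives a genuine positive radius only when $\kappa(t)>0$, and your orientation bookkeeping is the right way to reconcile the two.
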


Now, we can consider the parameter of dilations $D_\delta$ to be the radius of the osculating circle, 
$
    \delta= \frac{|r(t)|}{\kappa(t)}
$
and the space of non-zero radii (or inverse curvatures) to be $\mathbb{R}^+$ with multiplicative group law.

\begin{proposition}
The Lie algebra $\mathfrak{g}=(\mathcal{D}^{(3)}_{|_U}, [-,-]_{|_U})$ is isomorphic to $Lie(SIM(2))$. Moreover, the Engel distribution $\mathcal{D}_U:=\{\mathcal{D}_{(q,\ell)}:(q,\ell)\in U\}$ is left-invariant with respect to $SIM(2)$. 
 
\end{proposition}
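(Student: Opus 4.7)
The plan is to identify $U$ with the group manifold of $SIM(2)$ and then recognize the rescaled frame $\{\mathcal{X}_{loc}, \mathcal{R}_{loc}\}$ as a sum of left-invariant vector fields; both conclusions of the proposition will then follow at once.

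First I would use the preceding lemma on the osculating circle to set $\delta := |r|/\kappa$; on the affine chart where $\kappa$ is normalized to $1$ this reduces to $\delta = |r|$, giving a diffeomorphism $U \simeq \mathbb{R}^2 \times \mathbb{S}^1 \times \mathbb{R}^+$ with the underlying manifold of $SIM(2)$. Next, I would differentiate the semi-direct product law at the identity to obtain the left-invariant frame
\begin{align*}
\tilde{X}_1 &= \delta\cos\theta\,\partial_x + \delta\sin\theta\,\partial_y, & \tilde{X}_2 &= -\delta\sin\theta\,\partial_x + \delta\cos\theta\,\partial_y,\\
\tilde{T} &= \partial_\theta, & \tilde{S} &= \delta\,\partial_\delta,
\end{align*}
and verify by direct computation that the brackets reproduce the relations $[\tilde{T}, \tilde{X}_1]=\tilde{X}_2$, $[\tilde{T}, \tilde{X}_2]=-\tilde{X}_1$, $[\tilde{S}, \tilde{X}_i]=\tilde{X}_i$, $[\tilde{T},\tilde{S}]=[\tilde{X}_1,\tilde{X}_2]=0$ of $\mathrm{Lie}(SIM(2))$.

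The critical step is then to rewrite the frame \eqref{vectorfields} in this basis: under $\delta = r$ one reads off $\mathcal{X}_{loc} = \tilde{X}_1 + \tilde{T}$ and $\mathcal{R}_{loc} = \tilde{S}$. Since both generators are left-invariant, their span $\mathcal{D}_U$ is a left-invariant sub-bundle of $TU$, which gives the second assertion. For the first, iterated brackets using the relations above yield $[\mathcal{X}_{loc}, \mathcal{R}_{loc}] = -\tilde{X}_1$ and $[\mathcal{X}_{loc}, [\mathcal{X}_{loc}, \mathcal{R}_{loc}]] = -\tilde{X}_2$, so that $\mathcal{D}^{(3)}_{|_U} = \mathrm{span}\{\tilde{X}_1, \tilde{X}_2, \tilde{T}, \tilde{S}\}$ and inherits the bracket table of $\mathrm{Lie}(SIM(2))$.

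The main obstacle is conceptual rather than computational: it is the choice of rescaled generator $\mathcal{R}_{loc} = r\,\mathcal{R} = r\partial_r$ in place of the naive $\partial_r$. This rescaling, together with the affine-chart normalization $\kappa=1$, is precisely what forces bracket closure in dimension four; without it, the preceding lemma warns that the Lie closure of the Engel distribution is infinite-dimensional. Recognizing $r\partial_r$ as the dilation generator $\tilde{S}$ of $SIM(2)$ is the point on which the proposition turns, and once this identification is in hand the remaining verifications are purely mechanical.
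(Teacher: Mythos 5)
Your proposal is correct and follows essentially the same route as the paper: the paper's proof is a one-liner asserting that $\mathcal{X}_{loc}$ and $\mathcal{R}_{loc}$ are linear combinations of the left-invariant frame of $\mathfrak{sim}(2)$ (citing the computation in the Duits reference), which is exactly the identification $\mathcal{X}_{loc}=\tilde{X}_1+\tilde{T}$, $\mathcal{R}_{loc}=\tilde{S}$ that you derive. Your version merely fills in what the paper delegates to the citation — deriving the left-invariant frame from the semidirect product law and checking the bracket table and the closure of $\mathcal{D}^{(3)}_{|_U}$ explicitly — and all of those computations check out.
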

The proof follows immediately from the fact that $\mathcal{X}_{loc}$ and $\mathcal{R}_{loc}$ are linear combinations of the left-invariant vector fields corresponding tothe basis of the Lie algebra $\mathfrak{sim}(2)$, as calculated in \cite{Duits}.

\subsection{Receptive Profiles}
In the previous paragraph we established that the vector fields which span the Engel distribution away from the singularities $r=0$ and $k=0$ are left-invariant with respect to the $SIM(2)$ group. In general, the receptive profiles are obtained as minima of an uncertainty principle on $L^2(\mathbb{R}^2)$ with respect to suitable vector fields on $\mathbb{R}^2$. In \cite{Duits}, the vector fields associated to scale were chosen and the minimizers of this uncertainty principle were suitable for curvatures. Here, we use the vector fields generating the Engel structure, which we push-forward with the action of $SIM(2)$ on $\mathbb{R}^2$. These result to different differential operators and therefore to a different equation of  
receptive profiles. 
Using representation of $SIM(2)$ in $L^2(\mathbb{R}^2)$ and an appropriately chosen mother filter $\Psi^c_0: \mathbb{R}^2\rightarrow \mathbb{R}$
one can generate the family of curvature-orientation position receptive profiles 
\begin{equation*}
    \{\pi(x,y,\theta,\delta)\Psi^c_0:(x,y,\theta,\delta)\in SIM(2)\}
\end{equation*}
where $\mathbb{R}^2$ parametrizes position, $SO(2)$ parametrizes orientation and $\mathbb{R}^+$ has two possible interpretations, as the parameter space of scale (\cite{Duits}, \cite{Scale}) or as the parameter space of curvature (\cite{Liouville}, \cite{Pet}). 

At the moment, we merely require $\Psi^c_0$ that it be a Schwartz‑class function on $\mathbb{R}^2$ whose quasi‑regular transforms yield a frame of $L^2(\mathbb{R}^2)$.
The image of
the left-invariant vector fields in (\ref{vectorfields}) under the differential of the action $SIM(2)\xrightarrow{\phi} Diff(\mathbb{R}^2)$
is
\begin{equation*}
  d\phi(\mathcal{X}_{loc})= -\eta \partial_\xi+\xi\partial_\eta+\partial_\xi, ~ d\phi(\mathcal{R}_{loc})= \eta\partial_\eta +\xi\partial_\xi  
\end{equation*}
where $\xi=\frac{1}{r}sin(\theta)(x-x_0)+\frac{1}{r}cos(\theta)(y-y_0)$ and $\eta=\frac{1}{r}-sin(\theta)(x-x_0)+\frac{1}{r}cos(\theta)(y-y_0)$
and provides an algebra representation, given by the differential operators on $\mathbb{R}^2$
that correspond to the directional derivatives. 
Following the methodology of \cite{CortArch}, the shape of a curvature receptive profile is determined by the minima of the uncertainty principle with respect to $d\phi(\mathcal{X}_{loc})$ and $d\pi(\mathcal{R}_{loc})$, namely the receptive profile $\Psi_0^c$ should satisfy the equation
\begin{equation}
    d\phi(\mathcal{X}_{loc})\Psi^c_0=-id\pi(\mathcal{R}_{loc}\Psi_0^c)
\end{equation}

\section{Conclusion}   We introduce a model for curvature which combines the differential constraints of Engel structures found in previous models and the symmetry of the $SIM(2)$-group. Until now, the $SIM(2)$ was only related with scale and our model clarifies the relations of scale and curvature. As a result, we were able to associate the open submanifold of the curvature space where $\kappa\neq 0$ with the similitude group and provide a left-invariant basis of the Engel structure. We use this local basis to characterize the receptive profiles.
Finally, the behaviour of  global generators of $\mathcal{D}$ which form an infinite dimensional algebra will be studied in the future.

\begin{credits}
\subsubsection{\ackname} \footnotesize I would like to thank Professors Citti and Sarti for their valuable insights and helpful discussions throughout the development of this work. 
Supported by MNESYS project PE12 J33C2002970002,
and Regularity problems in sub-Riemannian structures PRIN 2022 F4F2LH - CUP J53D23003760006

\end{credits}

\end{document}